\documentclass[11pt]{article}
\usepackage[margin=1in]{geometry}
\usepackage{amsmath,amssymb,amsthm,mathtools}
\usepackage{booktabs}
\usepackage{graphicx}
\usepackage{microtype}
\usepackage{enumitem}
\usepackage{url}
\usepackage[hidelinks]{hyperref}
\graphicspath{{figures/}}

\newtheorem{theorem}{Theorem}[section]

\newtheorem{proposition}[theorem]{Proposition}
\newtheorem{corollary}[theorem]{Corollary}
\newtheorem{definition}[theorem]{Definition}

\newcommand{\MMS}{\operatorname{MMS}}
\newcommand{\RMMS}{\operatorname{RMMS}}
\newcommand{\EFL}{\operatorname{EFL}}
\newcommand{\EFX}{\operatorname{EFX}}
\newcommand{\bp}{\operatorname{bp}}
\newcommand{\cov}{\operatorname{cov}}

\title{Residual Maximin Share: Exact Finite-Agent Frontier, Sparse Extremizers, and Threshold Cuts}
\author{%
  Qinghua Qin$^{1,2,3}$\thanks{Email: \texttt{tsinghuaqin@pku.org.cn}, \texttt{qinqinghua@mail.las.ac.cn}.}\\[4pt]
  \small $^{1}$University of Chinese Academy of Sciences \quad $^{2}$Chinese Academy of Sciences \quad $^{3}$Huawei Technologies Co., Ltd.
}
\date{August 2026}

\begin{document}
\maketitle

\begin{abstract}
Residual maximin share ($\RMMS$) is the largest share threshold that remains guaranteeable throughout dynamic allocation processes, even after previously allocated, lower-valued bundles are removed from the item pool. For additive valuations, recent density-balance analyses established finite-agent lower bounds comparing $\RMMS$ with the classical maximin share ($\MMS$). In this paper, we prove that these finite-agent lower bounds are exact. Specifically, if $d_n$ denotes the largest odd integer at most $n$, the worst-case ratio satisfies
\[
\inf_{M,v:\MMS>0}\frac{\RMMS(M,v,n)}{\MMS(M,v,n)} = \lambda_n := \frac{2d_n}{3d_n-1}.
\]
Consequently, the exact additive frontier forms consecutive odd-even plateaus and converges monotonically to $2/3$.

We then investigate the combinatorial structure of extremal instances. While naive witnesses require $\Theta(n^2)$ items, we construct an explicit three-valued family achieving the exact boundary with only linear support: $(5n-3)/2$ items for odd $n$ and $(5n-4)/2$ items for even $n$. Its low-valued block supports two exact partitions that simultaneously certify the $\MMS$ benchmark and the residual obstruction. By modeling these dual partitions as a bipartite transportation graph, we prove that this block attains the absolute minimum support $q+d-1=3q$. At minimum support, any two-valued filler is uniquely rigid up to relabeling.

Finally, we establish structural characterizations of $\RMMS$. A general min--max representation applies to all finite monotone valuations. For integer additive valuations, we prove that a threshold $T$ is residual self-feasible if and only if every subset cut satisfies a packing-covering condition. Because $\RMMS$ is pointwise maximal among residual self-feasible shares, these exact constants establish a tight limitation on the fairness guarantees achievable by share-based lone-divider algorithms.
\end{abstract}

\section{Introduction}

The fair allocation of indivisible items is a central topic in computer science and economic theory, with practical applications in sequential resource allocation, cloud computing cluster assignment, inheritance division, and course registration systems~\cite{Budish2011,KurokawaEtAl2018,AmanatidisSurvey2023}. In these settings, items such as computing servers, course seats, or physical assets cannot be divided fractionally. A foundational static fairness benchmark is the \emph{maximin share} ($\MMS$), introduced by Budish~\cite{Budish2011}, which measures the value an agent can guarantee by partitioning all goods into $n$ bundles and receiving the least favorable one.

While static $\MMS$ provides a natural benchmark for one-shot allocations, dynamic and recursive allocation algorithms—such as the classical \emph{lone-divider} method and sequential picking protocols—face a structural challenge: when earlier agents select their preferred bundles and leave, the remaining item pool is diminished. If earlier agents remove bundles that a remaining agent values below the static $\MMS$ threshold, the residual goods may no longer allow the remaining $n-k$ agents to each receive a bundle of value at least $\MMS$.

\paragraph{Dynamic fairness and Residual Maximin Share.}
To address dynamic and recursive allocation settings, Feige~\cite{Feige2025} and Akrami, Feige, Mahara, Mehlhorn, and Rathi~\cite{AkramiEtAl2026} introduced the \emph{Residual Maximin Share} ($\RMMS$). A share threshold is \emph{residual self-feasible} if, whenever any $k$ bundles worth strictly less than the threshold are removed, the remaining goods can still be partitioned into $n-k$ bundles each worth at least the threshold. $\RMMS$ is defined as the supremum over all residual self-feasible thresholds.

Akrami et al.~\cite{AkramiEtAl2026} established that for additive valuations, $\RMMS$ satisfies the finite-agent lower bound:
\begin{equation}\label{eq:known-lower}
\RMMS(M,v,n)\ge
\begin{cases}
\dfrac{2n}{3n-1}\MMS(M,v,n),&n\text{ odd},\\[8pt]
\dfrac{2n-2}{3n-4}\MMS(M,v,n),&n\text{ even}.
\end{cases}
\end{equation}
Whether these finite-agent lower bounds are tight remained an open question.

\paragraph{Exact quantitative frontier.}
In this work, we prove that the lower bound~\eqref{eq:known-lower} is exact for every $n\ge2$. Let $d_n$ denote the largest odd integer at most $n$. We show that:
\begin{equation}\label{eq:exact-frontier-intro}
\boxed{
\inf_{M,v:\MMS>0}
\frac{\RMMS(M,v,n)}{\MMS(M,v,n)}
=\lambda_n:=\frac{2d_n}{3d_n-1}.
}
\end{equation}
The exact frontier exhibits an \emph{odd-even plateau} structure: consecutive odd and even values share the identical ratio,
\[
\lambda_{2r+1}=\lambda_{2r+2}=\frac{2r+1}{3r+1},
\]
which decreases monotonically to $2/3$. For example, for $n=3,4$ the exact ratio is $3/4 = 0.75$; for $n=5,6$ it is $5/7 \approx 0.7143$; for $n=7,8$ it is $7/10 = 0.70$.

\paragraph{Extremal instances and linear sparsity.}
A basic bi-valued construction matching $\lambda_n$ can be built using $2p-1$ high-value goods and $qd$ unit goods (where $q=\lfloor(n-1)/2\rfloor$, $d=2q+1$, and $p=n-q$), which requires $\Theta(n^2)$ items. We show that this quadratic number of goods is unnecessary by constructing a sparse, three-valued instance with \emph{linear support}:
\[
|M|=
\begin{cases}
(5n-3)/2,&n\text{ odd},\\
(5n-4)/2,&n\text{ even}.
\end{cases}
\]
The sparse instance uses $2p-1$ high goods of value $d$, $2q$ goods of value $q$, and $q$ unit goods. Its low-valued block admits two simultaneous exact partitions:
\begin{enumerate}[label=(\roman*),leftmargin=2em]
\item For the $\MMS$ certificate, it splits into $d$ bundles of value $q$, which pair with high goods to form $n$ bundles of value $d+q$;
\item For the $\RMMS$ obstruction, it splits into $q$ bundles of value $d=2q+1$. Removing these $q$ bundles leaves $2p-1$ high goods for $p$ remaining agents. Any threshold above $d$ would require two high goods per bundle (needing $2p$ high goods), which fails by exactly one good.
\end{enumerate}

\paragraph{Transportation graph minimality and rigidity.}
By mapping items to edges in a weighted bipartite transportation graph between the $d$ column bundles (value $q$) and $q$ row bundles (value $d$), coprimality $\gcd(q,d)=1$ forces the graph to be connected. Hence any valid filler must contain at least $q+d-1=3q$ items. Our construction meets this bound with equality, so its transportation graph is a spanning tree. Furthermore, under the two-value restriction, the $(q,q,1)$ configuration is uniquely forced up to relabeling.

\paragraph{Structural representations and threshold cuts.}
We also establish two structural characterizations of $\RMMS$:
\begin{enumerate}[label=(\alph*),leftmargin=2em]
\item \textbf{Finite Min--Max Representation:} For every finite monotone valuation, $\RMMS$ equals the minimum over all removal histories of the maximum between the costliest removed bundle and the residual $\MMS$.
\item \textbf{Threshold Packing-Covering Cut:} For integer additive valuations, a threshold $T$ is residual self-feasible if and only if for every subset $U\subseteq M$,
\[
\bp_T(U)+\cov_T(M\setminus U)\ge n,
\]
where $\bp_T(U)$ is the bin-packing number of $U$ with capacity $T-1$, and $\cov_T(M\setminus U)$ is the maximum number of disjoint value-$\ge T$ bundles in $M\setminus U$.
\end{enumerate}

\paragraph{Implications for the lone-divider framework.}
Because $\RMMS$ is pointwise maximal among all residual self-feasible shares and is self-maximizing~\cite{AkramiEtAl2026}, $\lambda_n$ is the exact upper bound on the $\MMS$ guarantee achievable by any dynamic algorithm whose invariant relies on residual self-feasibility.

\begin{figure}[t]
\centering
\includegraphics[width=.78\textwidth]{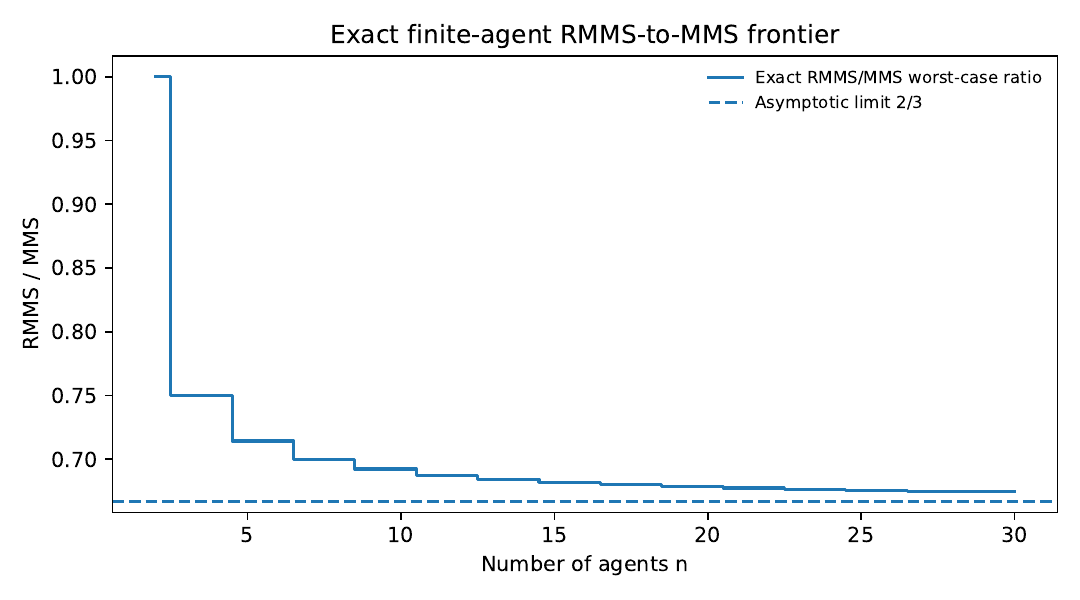}
\caption{The exact worst-case $\RMMS/\MMS$ frontier $\lambda_n$. Consecutive odd and even agent counts form plateaus, converging monotonically to $2/3$.}
\label{fig:frontier}
\end{figure}

\section{Preliminaries}

Let $M$ be a finite set of indivisible goods and let $v:2^M\to\mathbb R_{\ge0}$ be a normalized monotone valuation. An additive valuation satisfies $v(S)=\sum_{g\in S}v(g)$ for all $S\subseteq M$.

For any positive integer $r$ and $S\subseteq M$, the $r$-way maximin share is
\[
\MMS_r(S;v) = \max_{(P_1,\ldots,P_r)\in\Pi_r(S)} \min_{j\in[r]}v(P_j),
\]
and we write $\MMS(M,v,n)=\MMS_n(M;v)$.

\begin{definition}[Residual self-feasibility]
A share function $s(M,v,n)$ is \emph{residual self-feasible} if for every instance $(M,v,n)$, every integer $0\le k<n$, and every collection of $k$ pairwise disjoint bundles $B_1,\ldots,B_k$ satisfying $v(B_r)<s(M,v,n)$ for all $r\in[k]$, the residual item set $M\setminus\bigcup_{r=1}^k B_r$ can be partitioned into $n-k$ bundles each of value at least $s(M,v,n)$.
\end{definition}

\begin{definition}[Residual maximin share]
The \emph{residual maximin share} $\RMMS(M,v,n)$ is the supremum of all residual self-feasible thresholds for $(M,v,n)$.
\end{definition}

By definition, $\RMMS$ pointwise dominates every other residual self-feasible share. Feige~\cite{Feige2025} and Akrami et al.~\cite{AkramiEtAl2026} showed that $\RMMS$ is feasible and self-maximizing on all monotone valuations.

\section{A Finite Min--Max Representation}\label{sec:minmax}

For any collection of $k$ pairwise disjoint bundles $\mathcal B=(B_1,\ldots,B_k)$, define $R(\mathcal B)=M\setminus\bigcup_{r=1}^k B_r$ and
\begin{equation}\label{eq:phi-def}
\Phi(\mathcal B) = \max\left\{ \max_{r\in[k]}v(B_r),\ \MMS_{n-k}(R(\mathcal B);v) \right\},
\end{equation}
with $\max\varnothing=0$ when $k=0$.

\begin{theorem}[Finite min--max formula]\label{thm:minmax}
For every finite monotone valuation $v$,
\begin{equation}\label{eq:minmax}
\boxed{
\RMMS(M,v,n) = \min_{0\le k<n}\ \min_{\mathcal B=(B_1,\ldots,B_k)}\ \Phi(\mathcal B).
}
\end{equation}
\end{theorem}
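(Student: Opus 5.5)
We prove the identity by establishing two inequalities. Throughout, write $\Phi^\ast$ for the right-hand side of \eqref{eq:minmax}. Since $M$ is finite, there are finitely many pairs $(k,\mathcal B)$, so $\Phi^\ast$ is a genuine minimum. The case $k=0$ gives $\Phi(\varnothing)=\MMS_n(M;v)$, so $\Phi^\ast$ is finite and nonnegative. The guiding observation is that, for a fixed instance, a threshold $T$ is residual self-feasible exactly when the following holds: for every $k<n$ and every disjoint family $\mathcal B$ of $k$ bundles with $\max_r v(B_r)<T$, we have $\MMS_{n-k}(R(\mathcal B);v)\ge T$. This is a direct restatement of the definition, because a partition of $R(\mathcal B)$ into $n-k$ bundles each worth at least $T$ exists if and only if $\MMS_{n-k}(R(\mathcal B))\ge T$, the maximum being attained over finitely many partitions. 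The plan is to show first that every residual self-feasible $T$ satisfies $T\le\Phi^\ast$, and then that $\Phi^\ast$ itself is residual self-feasible.

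\emph{Upper bound.} Suppose $T$ is residual self-feasible, and fix any $(k,\mathcal B)$. If $\max_r v(B_r)\ge T$, then $\Phi(\mathcal B)\ge T$ immediately. Otherwise every removed bundle is worth less than $T$, so feasibility gives $\MMS_{n-k}(R(\mathcal B))\ge T$, and again $\Phi(\mathcal B)\ge T$. Minimizing over all $(k,\mathcal B)$ yields $T\le\Phi^\ast$, and taking the supremum over such $T$ gives $\RMMS\le\Phi^\ast$.

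\emph{Lower bound.} We show that $T=\Phi^\ast$ is itself residual self-feasible. Take any $k<n$ and disjoint $\mathcal B$ with $v(B_r)<\Phi^\ast$ for all $r$. Since $\Phi(\mathcal B)\ge\Phi^\ast$ and the first term of the maximum is strictly below $\Phi^\ast$, the second term must satisfy $\MMS_{n-k}(R(\mathcal B))\ge\Phi^\ast$. Hence $R(\mathcal B)$ splits into $n-k$ bundles each worth at least $\Phi^\ast$. For $k=0$ the condition on removed bundles is vacuous, and the same argument gives $\MMS_n(M)\ge\Phi^\ast$. Therefore $\Phi^\ast$ is residual self-feasible, so $\RMMS\ge\Phi^\ast$. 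This also shows that the supremum in the definition of $\RMMS$ is attained.

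The only real subtlety is bookkeeping at the level of a single instance. The definition is phrased for share functions, but $\RMMS(M,v,n)$ is the supremum of thresholds that are feasible for the fixed $(M,v,n)$. We should therefore interpret ``residual self-feasible threshold'' instance-wise, as the paper's Definition implicitly does. We should also check that the strict inequality $v(B_r)<T$ in the definition matches the non-strict comparison inside $\Phi$. This matching is exactly what makes the lower-bound step close without any limiting argument. Monotonicity of $v$ is not really used beyond ensuring that $\MMS$ is well defined; finiteness is what guarantees that every minimum and maximum is attained.
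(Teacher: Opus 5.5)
Your proposal is correct and follows essentially the paper's proof. The lower bound is identical: $\Phi^\ast$ (the paper uses any $t\le\Phi^\ast$) is residual self-feasible because the removed-bundle term of $\Phi$ lies strictly below the threshold, which forces $\MMS_{n-k}(R(\mathcal B))\ge\Phi^\ast$. Your upper bound is the contrapositive of the paper's: the paper picks a minimizing $\mathcal B^\ast$ and shows it defeats every $t>\Phi^\ast$, whereas you bound each feasible $T$ by every $\Phi(\mathcal B)$ directly, so that direction does not need the minimum to be attained.
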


\begin{proof}
Let $Q$ denote the right-hand side of~\eqref{eq:minmax}. Let $t\le Q$. Consider any $k<n$ and pairwise disjoint bundles $B_1,\ldots,B_k$ with $v(B_r)<t$ for all $r\in[k]$. Since $\Phi(\mathcal B)\ge Q\ge t$ and $\max_{r}v(B_r)<t$, it follows that $\MMS_{n-k}(R(\mathcal B);v)\ge t$. Thus the remainder admits an $(n-k)$-partition where each bundle has value at least $t$, so $t$ is residual self-feasible and $\RMMS\ge Q$.

Conversely, by finiteness of $M$, the minimum defining $Q$ is achieved by some configuration $\mathcal B^*$. For every $t>Q$, each removed bundle in $\mathcal B^*$ has value at most $Q<t$, while $\MMS_{n-k}(R(\mathcal B^*);v)\le Q<t$. Thus $\mathcal B^*$ witnesses the failure of residual self-feasibility at threshold $t$, so $\RMMS\le Q$.
\end{proof}

\begin{corollary}\label{cor:upperbounds}
For every monotone valuation $v$, $\RMMS(M,v,n)\le\MMS(M,v,n)$. Moreover, $\RMMS(M,v,n)$ is at most the minimax share of the instance.
\end{corollary}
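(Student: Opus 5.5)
Both claims follow from Theorem~\ref{thm:minmax} by plugging in particular removal histories, since the right-hand side of~\eqref{eq:minmax} is a minimum and any admissible $\mathcal B$ bounds $\RMMS$ from above.

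For the first claim, we take $k=0$, i.e.\ the empty history $\mathcal B=()$. Then the inner maximum over removed bundles is $0$ by the convention $\max\varnothing=0$, and $R(\mathcal B)=M$. Hence
\[
\Phi(()) = \max\{0,\MMS_n(M;v)\} = \MMS(M,v,n),
\]
and Theorem~\ref{thm:minmax} gives $\RMMS(M,v,n)\le \MMS(M,v,n)$ directly. Nonnegativity of $v$ is all that is needed here.

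For the second claim, I would first fix the definition of the minimax share: $\min_{(P_1,\dots,P_n)\in\Pi_n(M)}\max_j v(P_j)$. Then I would take an optimal minimax partition $(P_1,\dots,P_n)$ and use the history $\mathcal B=(P_1,\dots,P_{n-1})$ with $k=n-1<n$. The residual is $R(\mathcal B)=P_n$, and $\MMS_1(P_n;v)=v(P_n)$. Therefore
\[
\Phi(\mathcal B)=\max\Bigl\{\max_{j\le n-1}v(P_j),\,v(P_n)\Bigr\}=\max_j v(P_j),
\]
which equals the minimax share. Theorem~\ref{thm:minmax} then yields the bound.

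There is no real obstacle; the only care points are two edge cases. First, the disjoint bundles may be empty. This is allowed, since they are arbitrary pairwise disjoint subsets, and empty bundles simply have value $0$. Second, $\MMS_1$ of a set is its own value, which holds because the only $1$-partition is the set itself.
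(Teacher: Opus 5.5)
Your proof is correct and follows the intended route: the paper gives no separate proof, presenting the corollary as an immediate consequence of Theorem~\ref{thm:minmax}, obtained by evaluating $\Phi$ on the empty history ($k=0$) to get $\MMS$ and on the first $n-1$ parts of a minimax-optimal partition ($k=n-1$) to get the minimax share. Your edge-case checks are exactly what is needed: the convention $\max\varnothing=0$ together with $v\ge 0$, the admissibility of empty bundles, and $\MMS_1(S;v)=v(S)$.
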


\section{The Exact Additive RMMS/MMS Frontier}\label{sec:exact}

Let $d=d_n$ be the largest odd integer at most $n$, and let $\lambda_n = 2d/(3d-1)$.

\subsection{The Density-Balance Lower Bound}

\begin{theorem}[Residual density balance~\cite{KurokawaEtAl2018,AkramiEtAl2026}]\label{thm:lower}
For every nonnegative additive valuation,
\begin{equation}\label{eq:lower}
\RMMS(M,v,n)\ge\lambda_n\MMS(M,v,n).
\end{equation}
\end{theorem}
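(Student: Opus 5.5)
We work through Theorem~\ref{thm:minmax}. Normalize $\MMS(M,v,n)=1$ and set $t=\lambda_n$. It suffices to show the following: for every $k<n$ and every family of pairwise disjoint bundles $B_1,\ldots,B_k$ with $v(B_r)<t$, the residual $R=M\setminus\bigcup_r B_r$ can be split into $n-k$ bundles each worth at least $t$. Then every such removal history has $\Phi(\mathcal B)\ge t$, so the minimum in~\eqref{eq:minmax} is at least $t$. Fix an $\MMS$ partition $P_1,\ldots,P_n$ of $M$ with $v(P_j)\ge1$ for all $j$. The whole argument compares the removed mass against this partition.

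The first step is a set of reductions.
\begin{enumerate}[label=(\roman*),leftmargin=2em]
\item Every good of value $\ge t$ lies in $R$, since it cannot sit in a removed bundle. Each such good can serve as a singleton bundle. Removing it together with its $P_j$ lowers $n$ by one and keeps the $\MMS$ of the rest at least $1$. We must check that $t=\lambda_n$ is nonincreasing in $n$ in the right direction for this induction; since $\lambda_n$ decreases in $n$, the induced threshold only becomes easier. So we may assume all goods are worth $<t$.
\item Classify the remaining goods as \emph{large} ($v(g)>t/2$) or \emph{small} ($v(g)\le t/2$). No bundle of value $<t$ contains two large goods whose sum is $\ge t$. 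Each large good can be completed to a bundle of value $\ge t$ by a single partner or by small filler.
\end{enumerate}

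The second step builds the residual bundles. Pair large goods two at a time into bundles of value $\ge t$ whenever possible. Then run bag filling with small goods, adding one small good at a time until a bag reaches $t$. Since small goods are $\le t/2$, every bag closes with value $<3t/2$, and a bag started from a single large good closes below $t+t/2$ as well. Bag filling then succeeds provided that, at each stage, the residual value not yet used is at least $t$ times the number of bags still required.

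The third step, a counting inequality, is the core. Let $L$ be the number of large goods. Each removed bundle has value $<t$ and so contains at most one large good that cannot be paired within it. We charge the waste of each bag, namely its overshoot above $t$, against the $\MMS$ partition. A pure volume bound $v(R)\ge n-kt$ combined with bags of value $\le 3t/2$ only gives the ratio $2/3$. The improvement comes from a parity phenomenon. A bag that closes near $3t/2$ must contain a large good, and such large goods come from $\MMS$ bundles that, having value $\ge1$, cannot consist of a single large good alone when $t>1/2$. Pairing two large goods per $\MMS$ bundle forces an odd number $d$ of ``half-bundles'' to be left unmatched in the worst case, and this produces exactly the constant $2d/(3d-1)$.

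We expect the main obstacle to be turning this parity intuition into a clean amortized inequality valid for all $k$ simultaneously. Our first attempt will be a weighting scheme that assigns weight $1/(3d-1)$-multiples to large and small goods, and shows that each $P_j$ has weight $\ge1$ while each removed bundle and each closed bag has weight $\le$ its fair share. The odd/even plateau should then appear because for even $n$ one $\MMS$ bundle can always absorb a leftover large good, which reduces the effective count to $d_n$. If the weighting resists a uniform treatment, we fall back on the density-balance argument of~\cite{KurokawaEtAl2018,AkramiEtAl2026}, applied verbatim to residual instances.
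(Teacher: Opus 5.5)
Your proposal has a genuine gap: the third step, which is the only place the constant $\lambda_n$ could come from, is not yet an argument. No weights are specified, and no inequality is verified for the removed bundles or for the bags. Both the ``odd number $d$ of half-bundles'' and the even-$n$ absorption claim are asserted rather than derived. You concede this and fall back on the cited density-balance analysis. That fallback is in fact all the paper does here: it imports this theorem without proof, and Akrami et al.\ already state the bound for $\RMMS$ itself, so there is nothing to re-apply to residual instances. But then the sketch preceding the fallback contributes nothing verified. Your reduction (i) is sound if phrased as strong induction on $n$. Apply the residual statement for $n-1$ to $M\setminus P_j$ with the truncated bundles $B_r\setminus P_j$, which still have value $<t$, and use $\lambda_{n-1}\ge\lambda_n$; for $k=n-1$ the singleton $\{g\}$ already suffices.

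The sketch is also aimed at the wrong obstacle. The volume bound is not what limits you. For $k=0$ the $\MMS$ partition itself works. For $k\ge1$, if every residual good is worth at most $t/2$, bag filling closes each bag below $3t/2$ and succeeds once $n-kt\ge\tfrac{3}{2}t(n-k-1)+t$, i.e.\ $t\le 2n/(3n-k-1)$. At the worst case $k=1$ this is $2n/(3n-2)\ge\lambda_n$ for both parities. All the difficulty therefore sits in large goods with value in $(t/2,t)$, and that is exactly where your second step breaks. Pairing two large goods ``whenever possible'' can create a bundle worth almost $2t$, so the $3t/2$-per-bag bound your counting relies on fails for these bundles. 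This is precisely the mechanism of the extremal instance. Rescaled to $\MMS=1$, its high goods are worth $d/(d+q)=\lambda_n$, so for any threshold $t'>\lambda_n$ close to $\lambda_n$ they are large goods just below $t'$. The adversary packs the whole low block into $q$ removed bundles of value $\lambda_n<t'$, leaving $2p-1$ large goods for $p$ bags. A valid proof has to bound how much small value the removed bundles can absorb against the number of large goods left in the residual; note that an $\MMS$ bundle containing one large good carries more than $1-t$ of other value. The constant comes from that balance (in the extremal case, $d(1-\lambda_n)=q\lambda_n$), not from a per-bag overshoot estimate. Until that inequality is stated and proved, the proposal does not establish the theorem.
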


\subsection{A Finite Bi-Valued Exact Witness}

Let $q=\lfloor(n-1)/2\rfloor$, $d=2q+1=d_n$, and $p=n-q$. Consider an instance with $2p-1$ high goods of value $d$ and $qd$ unit goods.

\begin{proposition}[Bi-valued exact witness]\label{prop:bivalue}
The bi-valued instance satisfies $\MMS=d+q$ and $\RMMS=d$, with support $\Theta(n^2)$.
\end{proposition}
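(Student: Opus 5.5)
The plan is to compute the three quantities directly, using Theorem~\ref{thm:minmax} for $\RMMS$. Throughout, $n=p+q$, $d=2q+1$, and $p\ge q+1$; indeed $p=q+1$ for odd $n$ and $p=q+2$ for even $n$. The total value is $(2p-1)d+qd=(2p+q-1)d$.

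\textbf{$\MMS$ upper bound.} We have $(2p+q-1)d = n(d+q)-\bigl(n q - (p-1)d\bigr)$, so the averaging bound alone does not settle the matter; we need a structural argument. In any $n$-partition, at most $2p-1$ bundles contain a high good. Since $n=p+q\le 2p-1$, it is possible that every bundle contains a high good. Any bundle containing at most one high good has value at most $d$ plus its unit goods. Some bundles must contain at most one high good, because $2p-1<2n$; in fact at least $2n-(2p-1)=2q+1=d$ bundles have at most one high good.

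Suppose every bundle has value at least $d+q+1$. Then each of these $d$ bundles with at most one high good needs at least $q+1$ units if it has one high good, or at least $d+q+1$ units if it has none. That requires at least $d(q+1)>qd$ units, a contradiction. Hence $\MMS\le d+q$. This uses integrality of values.

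\textbf{$\MMS$ lower bound.} We exhibit a partition achieving $d+q$. Pair $d$ of the high goods each with $q$ units; this uses exactly $qd$ units. The remaining $2p-1-d=2(p-q)-2$ high goods are grouped in pairs, giving $p-q-1$ bundles of value $2d\ge d+q$. The bundle count is $d+(p-q-1)=q+p=n$. So $\MMS=d+q$.

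\textbf{$\RMMS\le d$.} This follows from Theorem~\ref{thm:minmax} applied with $k=q$, where we remove $q$ bundles each consisting of $d$ unit goods, each of value $d$. The residual consists of $2p-1$ high goods to be split among $p$ agents, so its $p$-way $\MMS$ is $d$, because some bundle contains at most one high good. Hence $\Phi=d$.

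\textbf{$\RMMS\ge d$.} This is the main obstacle. Take any $k<n$ and any removed bundles, each of value below some $t\le d$; we must show the residual can be split into $n-k$ bundles each worth at least $t$. The values are integers, so it suffices to treat $t=d$, since smaller $t$ is easier by monotonicity of the condition in $t$. Every removed bundle then has value at most $d-1$, so it contains no high good and at most $d-1$ units. Hence all $2p-1$ high goods survive, and the $q d-u$ surviving units satisfy $u\le k(d-1)$.

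Form $n-k$ bundles as follows. Give each bundle one high good if $2p-1\ge n-k$. Otherwise the leftover bundles must be filled by surviving units in groups of $d$, combined with pairs of high goods.

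The argument splits into two cases. If $k\ge q$, then $n-k\le p\le 2p-1$, so single high goods suffice. If $k<q$, then $n-k-(2p-1)=q-k-p+1\le 0$ since $p\ge q+1$, so again single high goods suffice.

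So every bundle gets a high good of value $d$ in all cases, and any leftover items are appended to some bundle. Hence $\RMMS\ge d$. The check that this is simpler than feared is the key step to verify carefully.

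Finally, the support is $2p-1+qd=\Theta(n^2)$, since $q,d=\Theta(n)$. The ratio is $d/(d+q)=2d/(3d-1)=\lambda_n$, matching Theorem~\ref{thm:lower}.
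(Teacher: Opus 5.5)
\textbf{Gap in the $\MMS\le d+q$ step.} The step that fails is your claim that at least $2n-(2p-1)=d$ bundles contain at most one high good. This is false: pigeonhole only guarantees $n-(p-1)=q+1$ such bundles. For $n=3$, the partition $\{h_1,h_2\},\{h_3\}$ plus the bundle of all units has only two. With $q+1$ bundles each needing $q+1$ units you get $(q+1)^2\le qd$ once $q\ge2$, so no contradiction follows as written.

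The number $2n-(2p-1)$ is right only as a count of missing high-good \emph{slots}, not of bundles. A bundle with $a_j$ high goods and value at least $d+q+1$ needs at least $(q+1)(2-a_j)^+$ units; for $a_j=0$ it needs $3q+2\ge 2(q+1)$. Summing gives at least $(q+1)\sum_j(2-a_j)=(q+1)d>qd$ units, which is the contradiction you wanted.

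Simpler still, the averaging bound you set aside does work once you use integrality. Your own identity shows the total value is $n(d+q)$ for odd $n$, and $n(d+q)+(q+1)$ with $q+1<n$ for even $n$. So some bundle, having integer value, is worth at most $d+q$. The paper's proof argues only $\MMS\ge d+q$ explicitly, which is all the ratio needs. Supplying the upper bound is therefore worthwhile, but it needs one of these repairs.

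\textbf{The remaining steps.} These are correct. Your $\MMS\ge d+q$ partition matches the paper. Your $\RMMS\le d$ step is the paper's removal of $q$ unit bundles of value $d$, phrased through Theorem~\ref{thm:minmax}.

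For $\RMMS\ge d$ you take a different and more elementary route. The paper invokes the density-balance bound, $\RMMS\ge\lambda_n\MMS\ge\lambda_n(d+q)=d$. You instead note that a removed bundle of value below $d$ cannot contain a high good, so all $2p-1\ge n\ge n-k$ high goods survive and each residual bundle can take one. This makes the witness self-contained and independent of the cited theorem; the paper's route is shorter given that theorem. Your case split on $k$ is unnecessary, since $2p-1\ge n$ already covers every $k$.
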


\begin{proof}
If $n$ is odd, $d=n=2p-1$. Allocating one high good and $q$ unit goods per bundle gives $n$ bundles of value $d+q$, so $\MMS=d+q$. If $n$ is even, $d=n-1$ and there are $n+1$ high goods. Allocating two high goods to one bundle and one high good plus $q$ unit goods to each of the other $n-1=d$ bundles yields $\MMS=d+q$.

By Theorem~\ref{thm:lower}, $\RMMS\ge\lambda_n(d+q)=d$. Conversely, partition the $qd$ unit goods into $q$ bundles of value $d$. For any $t>d$, removing these $q$ bundles leaves $2p-1$ high goods for $p$ remaining agents. Each bundle of value $\ge t$ requires at least two high goods (requiring $2p$ goods), but only $2p-1$ exist. Hence $\RMMS\le d$.
\end{proof}

\subsection{A Linear-Support Sparse Exact Witness}

\begin{theorem}[Sparse exact witness]\label{thm:sparse}
For every $n\ge2$, consider an additive instance consisting of:
\begin{itemize}[nosep]
\item $2p-1$ high goods of value $d=2q+1$;
\item $2q$ goods of value $q$;
\item $q$ goods of value $1$.
\end{itemize}
Then $\MMS(M,v,n)=d+q$ and $\RMMS(M,v,n)=d$. The number of positive goods is linear:
\begin{equation}\label{eq:sparse-count}
|M|=
\begin{cases}
(5n-3)/2,&n\text{ odd},\\
(5n-4)/2,&n\text{ even}.
\end{cases}
\end{equation}
\end{theorem}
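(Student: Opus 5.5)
The plan follows the template of Proposition~\ref{prop:bivalue}: an explicit partition gives $\MMS\ge d+q$, a total-value count gives $\MMS\le d+q$, Theorem~\ref{thm:lower} gives $\RMMS\ge d$, and an explicit removal history gives $\RMMS\le d$. The only new ingredient is that the $qd$ unit goods are replaced by a low block $L$ of total value
\[
2q\cdot q+q = q(2q+1) = qd,
\]
the same as before. Everything then reduces to showing that $L$ admits two exact partitions: into $d$ bundles of value $q$, and into $q$ bundles of value $d$.

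\emph{The two partitions of $L$.}
\begin{itemize}[nosep]
\item Into $d=2q+1$ bundles of value $q$: take the $2q$ singletons of value $q$, plus one bundle consisting of all $q$ unit goods.
\item Into $q$ bundles of value $d=2q+1$: pair the value-$q$ goods into $q$ pairs, and add one unit good to each pair, giving $q+q+1=d$.
\end{itemize}
The degenerate case $q=0$ ($n=2$) has $L$ empty and is trivial.

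\emph{The $\MMS$ value.} I will reuse the case split of Proposition~\ref{prop:bivalue}.
\begin{itemize}[nosep]
\item For $n$ odd, there are $2p-1=n=d$ high goods. Match each with one value-$q$ bundle of the first partition, giving $n$ bundles of value $d+q$.
\item For $n$ even, there are $n+1$ high goods. Put two of them into one bundle (value $2d\ge d+q$) and match the remaining $d=n-1$ with the $d$ bundles of value $q$.
\end{itemize}
For the upper bound $\MMS\le d+q$:
\begin{itemize}[nosep]
\item For $n$ odd, the total value is $n(d+q)$, so no partition does better.
\item For $n$ even, the total value is $(n+1)d+qd = (n-1)(d+q) + 2d + q$. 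This bare count does not immediately give $\le n(d+q)$ with equality, since $2d+q > d+q$. So I would argue structurally, in the style of Proposition~\ref{prop:bivalue}. Suppose every bundle has value $>d+q$. A bundle with at most one high good then needs more than $q$ of low value, and the low total is only $qd$. Let $a$ be the number of bundles with $\le 1$ high good. Then $a\le n$, and high goods are counted as $2(n-a)+a\le n+1$, so $a\ge n-1=d$. Those $a\ge d$ bundles each need low value $\ge q+1$, which requires $d(q+1)>qd$, a contradiction.
\end{itemize}
This counting is the step I would check most carefully; it carries over verbatim from the bi-valued case, since only the low total matters.

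\emph{The $\RMMS$ value.} Since $\lambda_n(d+q)=\frac{2d}{3d-1}\cdot\frac{3d-1}{2}=d$, Theorem~\ref{thm:lower} gives $\RMMS\ge d$. For the upper bound, remove the $q$ bundles of value $d$ from the second partition. These are legal removals for any $t>d$, and by Theorem~\ref{thm:minmax} (or directly from the definition) this history witnesses infeasibility at $t$. What remains is the $2p-1$ high goods for $p$ agents, and any bundle of value $\ge t>d$ needs two high goods, so at least $2p$ would be required. Hence $\RMMS\le d$.

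\emph{The support count.} The number of goods is
\[
2p-1+3q = 2n-2q-1+3q = 2n-1+q.
\]
With $q=(n-1)/2$ for $n$ odd this gives $(5n-3)/2$, and with $q=(n-2)/2$ for $n$ even it gives $(5n-4)/2$, matching \eqref{eq:sparse-count}.

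The main obstacle is the even-$n$ $\MMS$ upper bound, which I would handle by the high-good counting argument above.
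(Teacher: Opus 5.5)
Your plan is the paper's argument. You use the two exact partitions of the low block, with partition (a) paired with the high goods for $\MMS\ge d+q$, removal of the $q$ bundles of partition (b) for $\RMMS\le d$, and Theorem~\ref{thm:lower} for $\RMMS\ge d$. You also go further than the paper's proof by addressing $\MMS\le d+q$ and the support count, both of which the paper leaves implicit.

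One step needs repair. In the even-$n$ upper bound, the count $2(n-a)+a\le n+1$ assumes each of the $a$ bundles with at most one high good contains exactly one. Bundles with no high good are not handled, so $a\ge d$ is not justified. Split them instead. If $a_0$ and $a_1$ bundles contain zero and one high goods, then $a_1+2(n-a_0-a_1)\le n+1$ gives $2a_0+a_1\ge n-1=d$. The low value needed is then at least $a_1(q+1)+a_0(d+q+1)\ge (q+1)(a_1+2a_0)\ge (q+1)d>qd$, which is the contradiction you want.

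There is also a simpler route. Your identity for the total is off by $q$: for even $n$ the true total is $(n+1)d+qd=n(d+q)+n/2$. So the average bundle value is $d+q+\tfrac12$. Since all values are integers, the bare count already gives $\MMS\le d+q$.
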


\begin{proof}
The low-valued block (total value $2q^2+q=qd$) admits two exact partitions:
\begin{enumerate}[label=(\alph*),nosep]
\item $d$ bundles of value $q$: $2q$ singletons of value $q$, plus one bundle of all $q$ unit goods;
\item $q$ bundles of value $d$: each containing two value-$q$ goods and one unit good ($q+q+1=d$).
\end{enumerate}

For $\MMS$, combining partition (a) with the high goods yields $n$ bundles of value at least $d+q$. For $\RMMS$, removing the $q$ bundles of value $d$ from partition (b) leaves $2p-1$ high goods for $p$ agents, which forces $\RMMS\le d$. Thus $\RMMS/\MMS = d/(d+q) = \lambda_n$.
\end{proof}

\begin{theorem}[Exact RMMS-to-MMS factor]\label{thm:exact}
For every $n\ge2$ and nonnegative additive valuations,
\begin{equation}\label{eq:exact}
\boxed{
\inf_{M,v:\MMS(M,v,n)>0} \frac{\RMMS(M,v,n)}{\MMS(M,v,n)} = \lambda_n = \frac{2d_n}{3d_n-1}.
}
\end{equation}
\end{theorem}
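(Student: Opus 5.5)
The proof combines the lower bound already available with one of the exact witnesses constructed above.

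For the inequality ``$\ge$'', I invoke Theorem~\ref{thm:lower}. For every nonnegative additive instance with $\MMS(M,v,n)>0$, dividing~\eqref{eq:lower} by $\MMS$ gives $\RMMS/\MMS\ge\lambda_n$. Hence the infimum is at least $\lambda_n$.

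For the inequality ``$\le$'', I exhibit an instance attaining the ratio exactly, either the bi-valued witness of Proposition~\ref{prop:bivalue} or the sparse witness of Theorem~\ref{thm:sparse}. Both have $\MMS=d+q>0$ and $\RMMS=d$. With $d=2q+1$, so that $q=(d-1)/2$, the ratio is
\[
\frac{d}{d+q}=\frac{d}{d+(d-1)/2}=\frac{2d}{3d-1}=\lambda_n .
\]
The infimum is therefore at most $\lambda_n$, and in fact attained, so it is a minimum.

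The main point to check is that the witness really has $\RMMS=d$ exactly for every $n\ge2$.
\begin{itemize}
\item The upper bound $\RMMS\le d$ uses the removal argument. After deleting the $q$ bundles of value $d$, each worth strictly less than any threshold $t>d$, we are left with $2p-1$ high goods for $p$ agents. Any bundle worth at least $t>d$ must contain two high goods, since each high good is worth exactly $d$ and nothing else remains. There are not enough high goods for this.
\item I need $p\ge1$, i.e.\ $k=q<n$, so that the removal history is admissible in the definition. This holds since $q=\lfloor (n-1)/2\rfloor<n$.
\item The lower bound $\RMMS\ge d$ comes from Theorem~\ref{thm:lower} together with $\MMS\ge d+q$.
\item I also verify $\MMS\le d+q$: the total value is $(2p-1)d+qd$, and for odd $n$ this equals $n(d+q)$. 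For even $n$ the case is delicate, because the total exceeds $n(d+q)$. There I argue that some bundle receives at most one high good, since $2p-1=n+1<2n$. Hence some $n-1$ bundles hold at most one high good each, and the low value $qd$ spread over them forces some bundle to have value at most $d+q$ by averaging over those $d=n-1$ bundles.
\end{itemize}

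The edge case $n=2$ ($q=0$, $d=1$, $\lambda_2=1$) follows from Corollary~\ref{cor:upperbounds}. Here $\RMMS\le\MMS$, and combined with the lower bound this gives ratio exactly $1$.
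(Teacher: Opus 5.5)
Your proof is correct and follows the paper's route: Theorem~\ref{thm:lower} gives $\RMMS/\MMS\ge\lambda_n$ on every instance, and the witness of Theorem~\ref{thm:sparse} (or Proposition~\ref{prop:bivalue}), with $\MMS\ge d+q$ and $\RMMS\le d$, shows that the ratio $\lambda_n$ is attained.

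One caveat concerns your even-$n$ check of $\MMS\le d+q$, which is wrong as written. Having one bundle with at most one high good does not give $n-1$ such bundles: for $n=4$, the partition $\{H,H\},\{H,H\},\{H,\text{low}\},\{\text{low},\text{low}\}$ has only two. The step is unnecessary, though. The infimum bound needs only $\MMS\ge d+q$ and $\RMMS\le d$, and $\MMS\le d+q$ then follows from $\lambda_n\MMS\le\RMMS\le d$.
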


\begin{corollary}[Odd-even plateaus]\label{cor:plateaus}
For every $r\ge0$ with $n\ge2$, $\lambda_{2r+1}=\lambda_{2r+2}=\frac{2r+1}{3r+1}$, decreasing to $2/3$.
\end{corollary}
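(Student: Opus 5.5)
The corollary is a computation from the definition $\lambda_n=2d_n/(3d_n-1)$, where $d_n$ is the largest odd integer at most $n$. I will evaluate $d_n$ on consecutive pairs of integers and then study the resulting function of $r$.

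\textbf{Plateau identity.} For $r\ge0$, the largest odd integer at most $2r+1$ is $2r+1$. The largest odd integer at most $2r+2$ is also $2r+1$, since $2r+2$ is even. Hence $d_{2r+1}=d_{2r+2}=2r+1$, and substituting gives
\[
\lambda_{2r+1}=\lambda_{2r+2}=\frac{2(2r+1)}{3(2r+1)-1}=\frac{2(2r+1)}{6r+2}=\frac{2r+1}{3r+1}.
\]
The restriction $n\ge2$ only excludes $n=1$. For $r=0$ the formula gives $\lambda_2=1$, which is consistent with $\RMMS=\MMS$ for two agents.

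\textbf{Monotonicity and limit.} Set $f(r)=(2r+1)/(3r+1)$. Then
\[
f(r)-\frac23=\frac{3(2r+1)-2(3r+1)}{3(3r+1)}=\frac{1}{3(3r+1)}.
\]
This difference is positive and strictly decreasing in $r$, and it tends to $0$. So the plateau values strictly decrease to $2/3$, and $\lambda_n$ is nonincreasing in $n$ with limit $2/3$.

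There is no real obstacle here. The only point needing care is the parity bookkeeping for $d_n$ at each end of a plateau.
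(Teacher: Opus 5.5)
Your proposal is correct and takes essentially the paper's route. Appendix A likewise just substitutes $d=n$ or $d=n-1$ into the ratio, phrased there as $d/(d+q)$ with $q=(d-1)/2$, and then writes $n=2r+1$ or $n=2r+2$ to obtain the common value $(2r+1)/(3r+1)$. Your identity $f(r)-\frac{2}{3}=\frac{1}{3(3r+1)}$ additionally proves the monotone convergence to $2/3$, which the paper asserts without computation.
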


\begin{figure}[t]
\centering
\includegraphics[width=.78\textwidth]{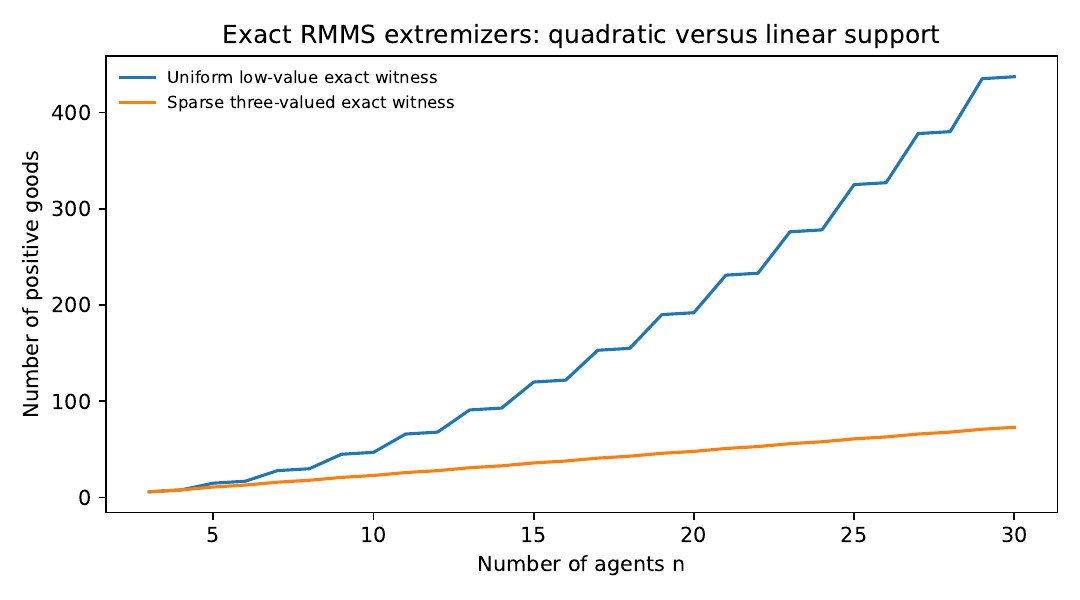}
\caption{Linear vs. quadratic support: our sparse construction achieves the exact frontier with $\Theta(n)$ items, compared to $\Theta(n^2)$ for uniform low-valued goods.}
\label{fig:support}
\end{figure}

\section{Minimum Support and Rigidity}\label{sec:support}

\begin{theorem}[Transportation-graph support bound]\label{thm:transport}
Let $q,d$ be positive coprime integers. Suppose a multiset $F$ of positive items admits:
\begin{enumerate}[nosep,label=(\roman*)]
\item a partition into $d$ bundles of value $q$;
\item a partition into $q$ bundles of value $d$.
\end{enumerate}
Then $|F|\ge q+d-1$. For $d=2q+1$, this bound equals $3q$ and is attained by Theorem~\ref{thm:sparse}.
\end{theorem}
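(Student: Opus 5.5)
We model $F$ as a bipartite multigraph and show it must be connected. Let the $d$ bundles of partition (i) be column vertices $C_1,\dots,C_d$ and the $q$ bundles of partition (ii) be row vertices $R_1,\dots,R_q$. Each item $x\in F$ lies in exactly one column bundle and exactly one row bundle, so we draw it as an edge between them with weight $v(x)>0$. This gives a bipartite multigraph $G$ on $q+d$ vertices with exactly $|F|$ edges. Every column vertex has weighted degree $q$ and every row vertex has weighted degree $d$. All of these are positive, so every vertex has at least one incident edge; no vertex is isolated.

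The key step is to show that $G$ is connected. Suppose $G$ has a connected component $K$ containing $a$ column vertices and $b$ row vertices. Every edge incident to a vertex of $K$ stays inside $K$. Hence the total weight of edges in $K$ can be counted from either side:
\[
aq=\sum_{C_j\in K}v(C_j)=w(K)=\sum_{R_i\in K}v(R_i)=bd.
\]
Because $\gcd(q,d)=1$, the equation $aq=bd$ forces $d\mid a$ and $q\mid b$.

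Now $K$ is nonempty and has no isolated vertices, so it contains at least one edge. Then $a\ge 1$ and $b\ge 1$, and therefore $a\ge d$ and $b\ge q$. Since $a\le d$ and $b\le q$, we get $a=d$ and $b=q$. Thus $K=G$, and $G$ is connected.

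A connected graph on $q+d$ vertices has at least $q+d-1$ edges (a spanning tree), so $|F|\ge q+d-1$. For $d=2q+1$ this equals $3q$. The filler of Theorem~\ref{thm:sparse} has $2q+q=3q$ items and admits both partitions by its proof, so the bound is attained; its graph is then a spanning tree.

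We expect no serious obstacle. The only subtle point is that the weight-counting must treat $G$ as a multigraph, so that parallel items between the same pair of bundles are each counted as separate edges. The argument also needs positivity of item values, which guarantees that every component has vertices on both sides and no vertex is isolated.
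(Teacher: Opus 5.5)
Your proposal is correct and follows the paper's proof: the same row/column transportation multigraph, and the same component weight count $aq=bd$ that forces $d\mid a$ and $q\mid b$ by coprimality, hence connectivity and at least $q+d-1$ edges. You also make explicit a point the paper leaves implicit: positive item values rule out isolated vertices, so every component has $a,b\ge 1$, and that is exactly what is needed to conclude the component is all of $G$.
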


\begin{proof}
Construct a bipartite graph with row vertices $R_1,\ldots,R_q$ (value-$d$ bundles) and column vertices $C_1,\ldots,C_d$ (value-$q$ bundles). Each good is an edge connecting the row and column containing it, weighted by its value.

In any connected component with $r$ rows and $c$ columns, total weight equality gives $rd=cq$. Since $\gcd(q,d)=1$, we have $q\mid r$ and $d\mid c$. Thus the graph is connected, requiring at least $q+d-1$ edges. For $d=2q+1$, $q+d-1=3q$, so the sparse block forms a spanning tree.
\end{proof}

\begin{theorem}[Rigidity at minimum support]\label{thm:rigidity}
Let $q\ge2$ and $d=2q+1$. Any multiset with minimum support $3q$ and at most two distinct positive values satisfying Theorem~\ref{thm:transport} is uniquely isomorphic to:
\[
2q\text{ goods of value }q
\qquad\text{and}\qquad
q\text{ goods of value }1.
\]
\end{theorem}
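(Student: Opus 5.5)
The plan is to pin down the two values by elementary divisibility arguments, using the partitions from Theorem~\ref{thm:transport} and the count $|F|=3q$. The tree structure of the transportation graph (Theorem~\ref{thm:transport}) gives useful intuition, but I expect it is not needed.

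\textbf{Step 1: exactly two distinct values occur.} Let $F$ have $3q$ positive items and total value $qd=2q^2+q$. If all items shared one value $a$, then $3qa=2q^2+q$ would force $a=(2q+1)/3$. I will check this case directly. Every column bundle has value $q$ and is made of items of value $a$, so $a\mid q$. Together with $a\mid 2q+1$ this gives $a\mid 1$, so $a=1$. But then $q=1$, contradicting $q\ge 2$. So $F$ uses exactly two distinct values.

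\textbf{Step 2: one of the two values is $q$.} The column degrees (items per column) sum to $3q$ over $d=2q+1$ columns. Each column contains at least one item, since its value is $q>0$. So at most $q-1$ columns have two or more items, and at least $q+2\ge 1$ columns are singletons. A singleton column is a single item of value exactly $q$, so $q$ is one of the values.

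\textbf{Step 3: the other value is $1$.} Call the other value $b\ne q$. No item exceeds a column total, so $b<q$. A column containing a $q$-item consists of that item alone. Hence every column containing a $b$-item contains only $b$-items, which gives $b\mid q$. Each row has value $2q+1=sq+tb$ for nonnegative integers $s,t$. Since $b\mid q$, this implies $b\mid 2q+1$, and together with $b\mid q$ we get $b\mid 1$, so $b=1$.

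\textbf{Step 4: count the items.} With $x$ items of value $q$ and $y$ items of value $1$, we have $x+y=3q$ and $xq+y=2q^2+q$. Subtracting gives $x(q-1)=2q^2-2q$. Because $q\ge2$, we may divide by $q-1$ to get $x=2q$, and then $y=q$. This is exactly the multiset of Theorem~\ref{thm:sparse}, unique up to relabeling of items.

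\textbf{Main obstacle.} The only delicate step is Step 3. We must see that any column holding a non-$q$ item holds only such items, which forces $b\mid q$, and that combining this with the row equation forces $b=1$. Everything else is counting. The hypothesis $q\ge2$ is used twice: to exclude the single-value case in Step 1, and to divide by $q-1$ in Step 4.
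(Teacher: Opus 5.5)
Your route differs from the paper's. The paper counts \emph{row} degrees: every item is at most $q$, so each value-$(2q+1)$ row needs at least three items, and $3q$ items over $q$ rows forces exactly three per row. It then finishes by comparing magnitudes in the two possible row types $a+2b$ and $2a+b$. Your Step~2 counts \emph{column} degrees instead: $3q$ items over $2q+1$ nonempty columns forces singleton columns, so $q$ must be a value. That step is correct and more direct than the paper's.

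There is one gap, in Step~3. You treat item values as integers, but the hypotheses do not say so. Theorem~\ref{thm:transport} concerns a multiset of positive items, and the paper's own Proposition~\ref{prop:uniform} produces the fractional value $x=1/k$, so non-integer values are in scope. For real $b$, your argument shows only that $q/b$ and $(2q+1)/b$ are positive integers. Hence $1/b=(2q+1)/b-2\,(q/b)\in\mathbb Z$, i.e.\ $b=1/m$ for some integer $m\ge1$. The inference ``$b\mid 1$, so $b=1$'' does not follow, and Step~4 as written presupposes $b=1$.

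The repair is one line and uses your own Step~2 bookkeeping. The $2q+1$ columns are nonempty and hold $3q$ items in total, so no column contains more than $3q-2q=q$ items. A column consisting entirely of value-$b$ items contains exactly $q/b=mq$ of them, and such a column exists by your Step~3. Hence $mq\le q$ and $m=1$. Alternatively, redoing Step~4 with $b=1/m$ gives $x=2q+\frac{q(m-1)}{mq-1}$, which is not an integer for $m,q\ge2$ because $0<q(m-1)<mq-1$.

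Step~1 has the same integrality slip but survives: for real $a$, $q/a=3q/(2q+1)$ lies strictly between $1$ and $3/2$ when $q\ge2$. You can also drop Step~1 entirely. Step~2 already shows $q$ is a value, and if it were the only one, then $3q\cdot q=2q^2+q$ would force $q=1$.

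With this patch your proof is complete.
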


\begin{proof}
The transportation graph is a tree with $q+d$ vertices and $3q$ edges. Every edge has weight $\le q$ (column capacity). Thus each of the $q$ rows requires degree at least $3$. Since the sum of row degrees is $3q$, every row has degree exactly $3$.

A single positive value $x$ is impossible since $3x=d$ and $q/x=3q/(2q+1)\notin\mathbb Z$. Let the two values be $a>b>0$. Each row has degree $3$, so each row contains the same number of $a$-edges (either 1 or 2). If each row had one $a$-edge and two $b$-edges, $a+2b=2q+1$ with $a\le q$ implies $b\ge(q+1)/2>q/2$, which would force all columns to be singletons, yielding only $2q+1 < 3q$ edges.

Thus every row contains two $a$-edges and one $b$-edge, so $2a+b=2q+1$. If $a\le q/2$, then $b\ge q+1>q$, impossible. Hence $a>q/2$, which implies $a+b\ge q+1>q$. Therefore, no column containing an $a$-edge can contain any other positive edge. These columns must be singletons of value $q$, so $a=q$ and $b=1$.
\end{proof}

\begin{proposition}[Uniform goods require quadratic support]\label{prop:uniform}
If every good in $F$ has the same value $x$, then $x=1/k$ for an integer $k\ge1$ and $|F|\ge qd=\Theta(n^2)$.
\end{proposition}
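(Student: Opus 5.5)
We argue directly from the two partition requirements of Theorem~\ref{thm:transport}, with $q,d$ coprime and $d=2q+1$. Suppose every good of $F$ has value $x>0$. A column bundle has value $q$, so it holds $q/x$ goods; a row bundle has value $d$, so it holds $d/x$ goods. Both numbers must be positive integers. Put $a=q/x$ and $b=d/x$. Then $a,b\in\mathbb Z_{>0}$ and $ad=bq$.

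Since $\gcd(q,d)=1$, the relation $ad=bq$ forces $q\mid a$. Write $a=kq$ with $k\ge1$ an integer. Then $x=q/a=1/k$, which is the first claim. (Equivalently, $b=kd$, so $d/x$ is an integer as well.) The step that needs care is that $1/x$ itself, and not merely $q/x$ or $d/x$, is an integer. Coprimality gives this directly: there are integers $\alpha,\beta$ with $\alpha q+\beta d=1$, so $1/x=\alpha(q/x)+\beta(d/x)\in\mathbb Z$.

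For the count, all goods have equal value, so
\[
|F|=\frac{v(F)}{x}=k\,v(F)=k\,qd\ge qd .
\]
Here $v(F)=qd$ can be read off either partition: $d$ bundles of value $q$, or $q$ bundles of value $d$. With $q=\lfloor(n-1)/2\rfloor$ and $d=2q+1$, we have $qd=q(2q+1)=\Theta(n^2)$ for $n\ge3$; for $n=2$ we get $q=0$ and the block is empty. The bound is tight at $k=1$, which is exactly the $qd$ unit goods of Proposition~\ref{prop:bivalue}. This contrasts with the $3q$ bound of Theorem~\ref{thm:transport}, which the sparse family of Theorem~\ref{thm:sparse} attains.

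The only real obstacle is the integrality step. Applying Bézout to $q/x$ and $d/x$ as above avoids any case analysis.
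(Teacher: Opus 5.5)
Your proof is correct and complete. The paper states this proposition without proof, so there is no different route to compare against; your argument is the counting argument the statement implicitly relies on: integrality of $q/x$ and $d/x$, then coprimality forcing $x=1/k$, then $|F|=v(F)/x=kqd\ge qd$. One small remark: the B\'ezout step is redundant, since $q\mid a$ already gives $x=q/a=1/k$ directly.
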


\section{Optimality among Residual Self-Feasible Shares}\label{sec:domination}

\begin{theorem}[Optimal domination]\label{thm:general-domination}
Let $\mathcal C$ be any class of finite monotone valuations. The supremum factor $\rho$ such that a residual self-feasible share $s$ $\rho$-dominates $\MMS$ is:
\begin{equation}\label{eq:general-domination}
\inf_{v\in\mathcal C:\,\MMS>0} \frac{\RMMS(M,v,n)}{\MMS(M,v,n)}.
\end{equation}
The same supremum holds if $s$ is additionally required to be self-maximizing.
\end{theorem}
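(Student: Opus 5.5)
The plan is to prove two inequalities for $\rho^*:=\inf_{v\in\mathcal C:\,\MMS>0}\RMMS/\MMS$. The achievability direction shows that $\RMMS$ itself is a residual self-feasible share that $\rho^*$-dominates $\MMS$ on $\mathcal C$. The converse direction shows that no residual self-feasible share can do better. Here "$s$ $\rho$-dominates $\MMS$ on $\mathcal C$" means $s(M,v,n)\ge\rho\,\MMS(M,v,n)$ for every instance with $v\in\mathcal C$. The two directions together identify the supremum and show that it is attained.

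For achievability, I would use the fact cited in the preliminaries that $\RMMS$ is itself residual self-feasible and self-maximizing on all monotone valuations (Feige; Akrami et al.). The finiteness in Theorem~\ref{thm:minmax} helps here: the minimum is attained, so the supremum in the definition of $\RMMS$ is actually a feasible threshold. Indeed, $t=Q$ is covered by the first half of that proof. By definition of the infimum, $\RMMS(M,v,n)\ge\rho^*\MMS(M,v,n)$ whenever $\MMS>0$. When $\MMS=0$ the inequality is trivial, since $\RMMS\ge0$. Hence $s=\RMMS$ attains factor $\rho^*$, and it does so even under the extra self-maximizing requirement. This gives supremum $\ge\rho^*$ in both versions of the statement.

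For the converse, let $s$ be any residual self-feasible share that $\rho$-dominates $\MMS$. The key step is pointwise domination: for every instance, $s(M,v,n)$ is a residual self-feasible threshold for that instance. This holds because the definition of residual self-feasibility of a share function quantifies instance by instance, with threshold $s(M,v,n)$. Hence $s(M,v,n)\le\RMMS(M,v,n)$ by the definition of $\RMMS$ as a supremum. Then $\rho\,\MMS\le s\le\RMMS$ on every instance with $\MMS>0$, so $\rho\le\RMMS/\MMS$ there, and taking the infimum gives $\rho\le\rho^*$. Adding the self-maximizing requirement only shrinks the set of admissible $s$, so it cannot raise the supremum. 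Combined with the previous paragraph, the supremum equals $\rho^*$ in both versions.

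The only real obstacle is definitional. One must check that residual self-feasibility of a share function reduces exactly to per-instance feasibility of the number $s(M,v,n)$, so that the comparison with $\RMMS$ is pointwise. One must also make sure the achieving share $\RMMS$ is itself feasible and self-maximizing, rather than only a supremum of feasible values. I would settle feasibility via Theorem~\ref{thm:minmax} and cite self-maximization from the earlier work. If $\mathcal C$ contains no instance with $\MMS>0$, the infimum is $+\infty$ by convention, and the statement holds vacuously under the same convention.
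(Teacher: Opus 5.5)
Your proposal is correct and follows the paper's own justification. The paper gives no separate written proof; it relies on $\RMMS$ being feasible and self-maximizing (cited from Feige and Akrami et al.) and pointwise maximal among residual self-feasible shares, which are exactly your achievability and converse directions. Your use of Theorem~\ref{thm:minmax} to check that the supremum $\RMMS$ is itself a feasible threshold is a harmless, self-contained substitute for that citation.
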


\begin{theorem}[Exact additive domination frontier]\label{thm:domination}
For additive valuations, $\RMMS$ $\lambda_n$-dominates $\MMS$, and no residual self-feasible share can $\rho$-dominate $\MMS$ for any $\rho>\lambda_n$.
\end{theorem}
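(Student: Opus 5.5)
The statement has two halves, and I will combine earlier results rather than argue from scratch. The positive half, that $\RMMS$ $\lambda_n$-dominates $\MMS$ on additive valuations, is exactly Theorem~\ref{thm:lower}: for every nonnegative additive instance, $\RMMS(M,v,n)\ge\lambda_n\MMS(M,v,n)$. I also note that $\RMMS$ is itself a residual self-feasible share. This is the feasibility result of Feige and Akrami et al.\ recalled in the preliminaries; alternatively, by Theorem~\ref{thm:minmax} the supremum is attained as a minimum $Q$, and the first half of that proof shows $t=Q$ is residual self-feasible. So $\RMMS$ qualifies as a candidate share, and it achieves the factor $\lambda_n$.

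For the negative half, I fix any residual self-feasible share $s$ and any $\rho>\lambda_n$. The key step is pointwise domination: on every instance, $s(M,v,n)$ is a residual self-feasible threshold, so $s(M,v,n)\le\RMMS(M,v,n)$ by the definition of $\RMMS$ as a supremum. I then evaluate on the sparse witness of Theorem~\ref{thm:sparse}, or the bi-valued witness of Proposition~\ref{prop:bivalue}. There $\MMS=d+q>0$ and $\RMMS=d$, hence
\[
s(M,v,n)\le d=\lambda_n(d+q)<\rho\,\MMS(M,v,n).
\]
So $s$ fails to $\rho$-dominate $\MMS$. Here I should check the identity $d/(d+q)=2d/(3d-1)$ using $q=(d-1)/2$, which is immediate.

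I could also phrase the conclusion as the additive specialization of Theorem~\ref{thm:general-domination} combined with Theorem~\ref{thm:exact}, since the infimum in \eqref{eq:general-domination} equals $\lambda_n$. I prefer the direct argument above, because it only uses the explicit witness.

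The one point needing care is the pointwise bound $s\le\RMMS$. A share function is residual self-feasible globally, meaning its value on every instance satisfies the threshold condition. So for the fixed witness instance, $s(M,v,n)$ is among the thresholds over which the supremum defining $\RMMS(M,v,n)$ is taken. I expect this unpacking of the definitions to be the only real obstacle; everything else is a citation.
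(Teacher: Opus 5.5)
Your proposal is correct and follows essentially the argument the paper intends. The paper gives no separate proof: it combines Theorem~\ref{thm:lower} for the positive half with the pointwise maximality of $\RMMS$ among residual self-feasible shares (noted in the preliminaries) and the exact witness behind Theorem~\ref{thm:exact}, which is the additive case of Theorem~\ref{thm:general-domination}. Your direct chain $s\le\RMMS=d=\lambda_n(d+q)<\rho\,\MMS$ on the sparse witness of Theorem~\ref{thm:sparse} is that same argument written out, and it only needs $\RMMS\le d$ and $\MMS\ge d+q$ on that instance.
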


\begin{center}
\small
\begin{tabular}{@{}lc@{}}
\toprule
Valuation Class & Best $\MMS$ Domination Factor among Residual Self-Feasible Shares\\
\midrule
Unit-demand & $1$\\
Additive & $\lambda_n=2d_n/(3d_n-1)$\\
Submodular & $1/n$\\
Subadditive & $1/n$\\
\bottomrule
\end{tabular}
\end{center}

\begin{corollary}[Lone-divider barrier]\label{cor:barrier}
On additive valuations, no share-based lone-divider algorithm certified via residual self-feasibility can guarantee a uniform $\MMS$ factor greater than $\lambda_n$.
\end{corollary}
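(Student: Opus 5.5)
The statement is Corollary~\ref{cor:barrier}. I would derive it directly from Theorem~\ref{thm:domination}, which rests on Theorem~\ref{thm:general-domination} and the exact constant of Theorem~\ref{thm:exact}.

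The first step is to make ``certified via residual self-feasibility'' precise. A share-based lone-divider algorithm of this kind fixes a share function $s$. Its correctness invariant is exactly that $s$ is residual self-feasible: after any $k$ agents leave with bundles that a remaining agent values below $s$, the residue can still be split into $n-k$ bundles worth at least $s$ to that agent. The guarantee the algorithm certifies to each agent $i$ is $s(M,v_i,n)$. So a uniform $\MMS$ factor $\rho$ certified this way means that $s(M,v,n)\ge\rho\,\MMS(M,v,n)$ for every additive instance. In other words, $s$ $\rho$-dominates $\MMS$.

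The second step is the pointwise bound. Since $s$ is residual self-feasible, the value $s(M,v,n)$ is a residual self-feasible threshold for the instance $(M,v,n)$. By the definition of $\RMMS$ as the supremum of such thresholds, $s(M,v,n)\le\RMMS(M,v,n)$.

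The third step combines these bounds with the witness instance from Theorem~\ref{thm:sparse}, or Proposition~\ref{prop:bivalue}. On that instance $\RMMS=d$ and $\MMS=d+q>0$. This gives
\[
\rho(d+q)\le s\le \RMMS=d,
\]
so $\rho\le d/(d+q)=\lambda_n$. This is the second half of Theorem~\ref{thm:domination}. The bound is tight, because $\RMMS$ itself attains $\lambda_n$ by Theorem~\ref{thm:lower}.

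The only real obstacle is the modeling step. One must ensure that any such algorithm's guarantee is indeed bounded by a residual self-feasible share evaluated per agent. If the certified guarantee were only required along histories that actually arise, the inequality $s\le\RMMS$ would still hold, since the witness history is realizable. Removing the $q$ low bundles, each valued below any $t>d$, is a valid sequence of lone-divider steps in which other agents with suitable valuations take those bundles. I would argue this by constructing the other agents' valuations so that they accept exactly those bundles.
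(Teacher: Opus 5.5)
Your argument is correct and is essentially the paper's: the corollary is read off from Theorem~\ref{thm:domination}. Like the paper, you formalize a certified factor $\rho$ as $\rho$-domination of $\MMS$ by a residual self-feasible share $s$, use the pointwise bound $s\le\RMMS$ from $\RMMS$ being the supremum of residual self-feasible thresholds, and evaluate on the exact witness where $\RMMS/\MMS=d/(d+q)=\lambda_n$. Your final paragraph on realizable histories goes beyond what the paper needs and is not justified as written: whether a given lone-divider run removes exactly the $q$ low bundles depends on the algorithm's choice of divider and matching, not only on the other agents' valuations. Drop it or treat it as a separate modeling remark.
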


\begin{corollary}[Simultaneous fairness guarantees]\label{cor:simultaneous}
For additive valuations, there exists:
\begin{enumerate}[label=(\roman*),nosep]
\item a partial allocation that is simultaneously $\EFX$ and $\lambda_n$-$\MMS$;
\item a complete allocation that is simultaneously $\EFL$ (hence $\mathrm{EF1}$) and $\lambda_n$-$\MMS$.
\end{enumerate}
\end{corollary}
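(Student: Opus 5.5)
We prove the two parts separately. Each rests on one earlier ingredient. For (i) we use the fact that $\RMMS$ is a feasible, self-maximizing share combined with the additive bound of Theorem~\ref{thm:lower}. For (ii) we use the standard envy-cycle / lone-divider style completion that preserves a share guarantee while producing $\EFL$.

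\emph{Part (i).} Every agent $i$ receives threshold $s_i=\lambda_n\MMS(M,v_i,n)$. By Theorem~\ref{thm:lower}, $s_i\le\RMMS(M,v_i,n)$. Monotonicity in the threshold means every threshold at most $\RMMS$ is itself residual self-feasible: a residual partition at a larger feasible threshold also works for a smaller one, and the hypothesis on removed bundles only gets stronger as the threshold drops. We then run a lone-divider procedure with thresholds $s_i$. As long as $k$ agents have left with bundles worth less than $s_i$ to each remaining agent $i$, residual self-feasibility guarantees that $i$ can partition the residual pool into $n-k$ bundles of value at least $s_i$. This is exactly the invariant the lone-divider matching step needs (Kurokawa et al.\ style, via Hall's theorem on the bipartite graph of acceptable bundles), so every agent ends with a bundle worth at least $s_i$.

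To obtain $\EFX$, each accepted bundle is trimmed to an inclusion-minimal subset of value at least $s_i$ for its recipient, and the trimmed remainders are discarded as unallocated. Minimality makes the allocation $\EFX$ for any agent who received a bundle. Other agents valued that bundle below their threshold when it was allocated, which is exactly the removal hypothesis; they are handled by the envy-graph argument of Akrami et al.\ for self-maximizing shares. Since $\RMMS$ is self-maximizing~\cite{AkramiEtAl2026}, I would invoke their general theorem that any feasible self-maximizing share admits a partial $\EFX$ allocation meeting the share, applied to the scaled share.

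\emph{Part (ii).} Starting from the partial allocation in (i), we allocate the unallocated goods one at a time. Each good goes to a source of the envy graph, and envy cycles are rotated whenever they appear. Rotation and additions never decrease any agent's value, so the $\lambda_n$-$\MMS$ guarantee persists. Giving each good to an unenvied agent preserves $\EFL$, since the added good is the witnessed good, and $\EFL$ implies $\mathrm{EF1}$.

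\textbf{Main obstacle.} The delicate point is checking that the scaled share $\lambda_n\MMS$ (rather than $\RMMS$ itself) satisfies the self-maximizing hypothesis needed by the cited framework. If scaling breaks self-maximization, I would instead run everything with the exact share $\RMMS$, which is self-maximizing. The $\lambda_n$ guarantee then follows pointwise from Theorem~\ref{thm:lower}.
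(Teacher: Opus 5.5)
\textbf{Gap in part (ii).} Your step ``giving each good to an unenvied agent preserves $\EFL$, since the added good is the witnessed good'' only establishes $\mathrm{EF1}$. $\EFL$ requires the witness $h$ to be \emph{less preferred}, i.e.\ $v_k(h)\le v_k(A_k)$. Suppose some agent $k$ has $v_k(g)>v_k(A_k)$ for an unallocated good $g$, and $g$ is given to a source $j$ with $A_j\neq\emptyset$. Then $\EFL$ from $k$ toward $A_j\cup\{g\}$ fails for every possible witness:
\begin{itemize}[nosep]
\item $h=g$ is not less preferred;
\item any $h\in A_j$ leaves $g$ in the bundle, so $v_k(A_j\cup\{g\}\setminus\{h\})\ge v_k(g)>v_k(A_k)$.
\end{itemize}
Nothing in your part (i) rules this out. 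The discarded trimmed remainders, and in general whatever a partial $\EFX$ allocation leaves unallocated, may be worth more to an already-matched agent than her whole bundle. Only the agents still present when a bundle is matched are known to value it below their threshold. So, as written, you obtain a complete allocation that is $\mathrm{EF1}$ and $\lambda_n$-$\MMS$, but not $\EFL$.

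\textbf{What is missing.} You need the invariant $v_k(g)\le v_k(A_k)$ for every agent $k$ and every unallocated good $g$. You can enforce it before the envy-cycle phase:
\begin{itemize}[nosep]
\item While some $k$ and unallocated $g$ violate it, replace $A_k$ by $\{g\}$ and return $A_k$ to the pool.
\item This keeps $\EFX$, since nobody strongly envies a singleton and $k$ only gains. It also keeps every share guarantee.
\item It terminates, because each step strictly raises one agent's value and leaves the others unchanged, so no partial allocation repeats.
\end{itemize}
Envy-cycle steps only raise values, so the invariant survives, and your witness argument then goes through. Alternatively, do what the paper intends for both parts: invoke the known existence of a complete $\EFL$ allocation giving each agent her $\RMMS$, and apply Theorem~\ref{thm:lower}.

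\textbf{Part (i).} Your fallback version is the paper's intended derivation: apply the cited existence theorem to the self-maximizing share $\RMMS$ itself, then use $\RMMS\ge\lambda_n\MMS$ from Theorem~\ref{thm:lower]}. This also makes your worry about the scaled share moot. Your own lone-divider-plus-trimming sketch does not yield $\EFX$ by itself. Inclusion-minimality in the recipient's valuation says nothing about how an earlier-matched agent views a later bundle. So part (i) rests entirely on the citation.
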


\section{Threshold Packing-Covering Cuts}\label{sec:cut}

Let $v$ be an integer additive valuation and $T\in\mathbb N_{>0}$. For $U\subseteq M$, define $\bp_T(U)$ as the minimum number of bundles partitioning $U$ with each bundle of value strictly below $T$ (bin packing with capacity $T-1$). For $R\subseteq M$, define $\cov_T(R)$ as the maximum number of disjoint bundles in $R$ of value at least $T$.

\begin{theorem}[Threshold packing-covering cut]\label{thm:cut}
For every integer additive instance and integer $T>0$,
\begin{equation}\label{eq:cut}
\boxed{
T\le\RMMS(M,v,n) \quad\Longleftrightarrow\quad \min_{U\subseteq M}\bigl(\bp_T(U)+\cov_T(M\setminus U)\bigr)\ge n.
}
\end{equation}
\end{theorem}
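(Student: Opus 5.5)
The plan is to reduce the boxed equivalence to a statement about residual self-feasibility of the single threshold $T$, and then prove both directions by a direct exchange between removal histories and subsets $U$.

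\textbf{Step 1: reduce to feasibility of $T$.} I would first show that $T\le\RMMS(M,v,n)$ holds if and only if $T$ is residual self-feasible for $(M,v,n)$. By Theorem~\ref{thm:minmax}, $\RMMS(M,v,n)=Q$ is an attained minimum. The first half of that proof shows that every $t\le Q$ is residual self-feasible. Conversely, any residual self-feasible $t$ satisfies $t\le\RMMS$ by the definition of $\RMMS$ as a supremum. So it suffices to prove that $T$ is residual self-feasible if and only if $\bp_T(U)+\cov_T(M\setminus U)\ge n$ for every $U\subseteq M$. Since values are integers, "value strictly below $T$" coincides with "value at most $T-1$", so $\bp_T$ counts exactly the bundles allowed in a removal history at threshold $T$.

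\textbf{Step 2: feasibility implies the cut condition.} Suppose $T$ is residual self-feasible, and assume for contradiction that some $U$ has $\bp_T(U)+\cov_T(M\setminus U)<n$.
\begin{itemize}[nosep]
\item Let $k=\bp_T(U)$. Because $\cov_T\ge 0$, we get $k<n$.
\item Fix an optimal packing of $U$ into $k$ bundles $B_1,\dots,B_k$, each with $v(B_r)<T$. This is an admissible removal history, and its residual set is exactly $M\setminus U$.
\item Feasibility then gives a partition of $M\setminus U$ into $n-k$ bundles, each of value $\ge T$. Hence $\cov_T(M\setminus U)\ge n-k$, contradicting the assumption.
\end{itemize}
The case $U=\varnothing$ is covered automatically: it reads $\cov_T(M)\ge n$, i.e.\ $\MMS\ge T$.

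\textbf{Step 3: the cut condition implies feasibility.} Take any $k<n$ and disjoint $B_1,\dots,B_k$ with $v(B_r)<T$, and set $U=\bigcup_r B_r$.
\begin{itemize}[nosep]
\item The $B_r$ themselves form a valid packing of $U$, so $\bp_T(U)\le k$.
\item The cut condition then yields $\cov_T(M\setminus U)\ge n-k\ge 1$. So there are $n-k$ disjoint bundles in $M\setminus U$, each of value $\ge T$.
\item Put all leftover residual items into one of these bundles. Additivity with nonnegative values keeps every bundle at value $\ge T$, so we obtain the required $(n-k)$-partition of the residual set.
\end{itemize}

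\textbf{Expected obstacle.} The only delicate point is Step 1: passing between "$T\le$ supremum" and "$T$ is itself feasible." Theorem~\ref{thm:minmax} settles this because the minimum is attained. A secondary point is the bookkeeping that $\bp_T(U)<n$ in Step 2, and that $n-k\ge 1$ in Step 3 so there is a bundle to absorb the leftover items. Both follow from the inequalities already in hand. Integrality of $v$ is used only to identify the strict bound $<T$ with bin capacity $T-1$.
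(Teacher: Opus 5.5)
Your proof is correct and is essentially the paper's argument: both directions identify a removal history with the set $U$ it covers, using an optimal sub-$T$ packing of $U$ as the history in one direction and $U=\bigcup_r B_r$ with $\bp_T(U)\le k$ in the other. You also make explicit two points the paper leaves implicit. The first is that $T\le\RMMS$ is equivalent to $T$ itself being residual self-feasible, via the attained minimum in Theorem~\ref{thm:minmax}. The second is that $n-k$ disjoint threshold bundles extend to a full partition of the residual by absorbing the leftover items.
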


\begin{proof}
If the minimum is at most $n-1$, let $U$ satisfy $b=\bp_T(U)$ and $c=\cov_T(M\setminus U)$ with $b+c\le n-1$. Removing the $b$ sub-$T$ bundles from $U$ leaves at most $c\le n-b-1$ threshold bundles in $M\setminus U$, so $T$ is not residual self-feasible.

Conversely, if $T$ is not residual self-feasible, there exist $k<n$ removed bundles of value $<T$ covering some set $U$, such that $M\setminus U$ cannot form $n-k$ bundles of value $\ge T$. Thus $\bp_T(U)\le k$ and $\cov_T(M\setminus U)\le n-k-1$, giving sum $\le n-1$.
\end{proof}

\section{Related Work}\label{sec:prior}

$\MMS$ was introduced by Budish~\cite{Budish2011} and analyzed extensively~\cite{KurokawaEtAl2018,AmanatidisSurvey2023,BabaioffFeige2025}. Feige~\cite{Feige2025} and Akrami et al.~\cite{AkramiEtAl2026} established $\RMMS$ and its connection to lone-divider algorithms and simultaneous fairness (EFX/EFL + MMS)~\cite{AkramiRathi2025}. Our work establishes the exact finite-agent frontier $\lambda_n$, proves support minimality and rigidity, and provides structural cut characterizations.

\section{Conclusion}

In this paper, we have determined the exact finite-agent additive frontier of the residual maximin share ($\RMMS$) relative to the classical maximin share ($\MMS$), proving that
\[
\inf_{M,v:\MMS>0} \frac{\RMMS(M,v,n)}{\MMS(M,v,n)} = \lambda_n = \frac{2d_n}{3d_n-1},
\]
where $d_n$ is the largest odd integer at most $n$. The exact frontier forms consecutive odd-even plateaus ($\lambda_{2r+1}=\lambda_{2r+2}=\frac{2r+1}{3r+1}$) and converges monotonically to $2/3$. This establishes that the lower bounds previously derived from residual density-balance analyses are exact for every finite $n\ge2$.

Beyond the numerical frontier, our analysis establishes several structural properties of dynamic fair division:
\begin{enumerate}[leftmargin=2em]
\item \textbf{Linear Support and Tree Minimality:} While standard bi-valued constructions require $\Theta(n^2)$ items, we constructed an explicit three-valued family achieving the exact boundary with linear support ($(5n-3)/2$ for odd $n$ and $(5n-4)/2$ for even $n$). By formulating the dual partitions as a bipartite transportation graph, we proved that $3q$ items is the absolute minimum support for any valid filler, forcing the graph to be a spanning tree. Under two values, the $(q,q,1)$ configuration is uniquely rigid.
\item \textbf{Theoretical Limits of the Lone-Divider Invariant:} Because $\RMMS$ is pointwise maximal and self-maximizing among all residual self-feasible shares, $\lambda_n$ constitutes the exact upper bound for any recursive fair division algorithm certified via residual self-feasibility.
\item \textbf{Structural Min--Max and Threshold Cuts:} We provided a finite min--max representation for general monotone valuations and an exact threshold packing-covering cut characterization ($\bp_T(U)+\cov_T(M\setminus U)\ge n$) for integer additive valuations.
\end{enumerate}

These structural characterizations suggest several directions for future work. A natural algorithmic problem is whether the threshold packing-covering characterization can be exploited to design pseudo-polynomial or parameterized approximation algorithms for computing $\RMMS$. Another open direction is studying whether alternative dynamic invariants can be designed for sequential allocation mechanisms that bypass the share-based lone-divider barrier.

\appendix

\section{Appendix A: Algebraic Identities for the Exact Ratio}

For odd $n$, $d=n$ and $q=(n-1)/2$, so
\[
\frac{d}{d+q} = \frac{n}{n+(n-1)/2} = \frac{2n}{3n-1}.
\]
For even $n$, $d=n-1$ and $q=(n-2)/2$, so
\[
\frac{d}{d+q} = \frac{n-1}{n-1+(n-2)/2} = \frac{2n-2}{3n-4}.
\]
Writing $n=2r+1$ or $n=2r+2$ yields the common plateau $(2r+1)/(3r+1)$.

\section{Appendix B: First Sparse Exact Witnesses}

Table~\ref{tab:sparse-witnesses} lists the explicit parameters of the sparse exact witness for small values of $n$.

\begin{table}[h]
\centering
\small
\begin{tabular}{@{}c c c c c c c@{}}
\toprule
$n$ & $q$ & $d$ & High Goods & Value-$q$ Goods & Unit Goods & $(\MMS,\RMMS)$\\
\midrule
2 & 0 & 1 & $3\times1$ & -- & -- & $(1,1)$\\
3 & 1 & 3 & $3\times3$ & $2\times1$ & $1\times1$ & $(4,3)$\\
4 & 1 & 3 & $5\times3$ & $2\times1$ & $1\times1$ & $(4,3)$\\
5 & 2 & 5 & $5\times5$ & $4\times2$ & $2\times1$ & $(7,5)$\\
6 & 2 & 5 & $7\times5$ & $4\times2$ & $2\times1$ & $(7,5)$\\
7 & 3 & 7 & $7\times7$ & $6\times3$ & $3\times1$ & $(10,7)$\\
8 & 3 & 7 & $9\times7$ & $6\times3$ & $3\times1$ & $(10,7)$\\
\bottomrule
\end{tabular}
\caption{Parameters of the sparse exact witness for $n\le 8$.}
\label{tab:sparse-witnesses}
\end{table}

\end{document}